\def\ds{\displaystyle}
\def\a{\alpha}
\def\b{\beta}
\def\s{\sigma}
\def\t{\tau}
\def\tp{\tau_p}
\def\r{\rho}
\def\E{{\bf E}}
\def\Ex{\E^x}
\def\F{{\cal F}}
\def\R{{\mathbb R}}
\def\L{{\mathbb L}}
\def\M{{\mathcal M}}
\def\I{{\mathcal I}}
\def\II{\mbox{\bf 1}}
\def\beq{\begin{equation}}
\def\eeq{\end{equation}}
\newcommand{\keywords}[1]{\par\addvspace\baselineskip
\noindent\keywordname\enspace\ignorespaces#1}
\begin{document}

\mainmatter  

\title{Real Options and Threshold Strategies}

\titlerunning{Real Options and Threshold Strategies}

%
%
\author{Vadim Arkin%
\and Alexander Slastnikov}
\authorrunning{Real Options and Threshold Strategies}

\institute{Central Economics and Mathematics Institute, Russian Academy of Sciences,\\
Nakhimovskii pr. 47, 117418 Moscow, Russia\\
}

%
%

\toctitle{Lecture Notes in Computer Science}
\tocauthor{Authors' Instructions}
\maketitle

\begin{abstract}
The paper considers an investment timing problem appearing in real options theory. Present values from an investment project are modeled by general diffusion process. We prove necessary and sufficient conditions under which an optimal investment time is induced by threshold strategy. We study also the conditions of optimality of threshold strategy (over all threshold strategies) and discuss the connection between solutions to  investment timing problem and to free-boundary problem.
\keywords{real options, investment timing problem, diffusion process, optimal stopping, threshold stopping time, free-boundary problem}
\end{abstract}

\section{Introduction}

One of the fundamental problems in real options theory concerns the determination of optimal time for investment into a given project (see, e.g., classical monograph \cite{DP}).

Let us consider an investment project, for example, a creating of new firm in the real sector of economy. This project is characterized by a
pair $(X_t ,t \ge 0,\,\,I)$, where $X_t$ is a present value of the
firm created at time $t$, and $I$ is a cost of investment required to implement the project (for example, to create the firm). Prices on input and output production are assumed to be stochastic, so  $X_t$ is considered as a stochastic process,
defined at a probability space with filtration  $(\Omega, \F,
\{\F_t, t \ge 0\},{\bf P})$. This model supposes that:

\begin{itemize}
\itemsep=0pt
\item[-] at any moment,  a decision-maker (investor)  can either
{\it accept} the project and proceed  with  the investment  or
{\it delay} the decision until he obtains  new information;
\item[-]
investment are considered to be instantaneous  and irreversible so
that  they cannot be withdrawn from the project any more and used
for other purposes.
\end{itemize}

The investor's problem is to evaluate the project and to determine an
appropriate time for the investment (investment timing problem). In real option theory investment times are considered as stopping times (regarding to flow of $\s$-algebras $\{\F_t, t \ge 0\}$).

In real options theory there are two different approaches to solving investment timing problem (see \cite{DP}).

The value of project under the first approach is the maximum of
net present value (NPV) from the implemented project over all stopping times (investment rules):
\beq\label{1}
F = \mathop {\max }\limits_\tau{\E}(X_\tau -
I)e^{-\rho \tau },
\eeq
where  $\r$ is the given discount rate. An optimal stopping
time $\t^*$ in (\ref{1}) is viewed as optimal investment time
(investment rule).

Within the second approach an opportunity to invest is considered
as an American call option -- the right but not obligation to buy the asset on
predetermined price. At that an exercise time is viewed as
investment time, and value of option is accepted as a value of
investment project. In these framework a project is spanned with
some traded asset, which price is completely correlated
with present value of the project $X_t$.
In order to evaluate a (rational) value of this real option one can use methods of financial options pricing theory, especially, contingent claims analysis (see, e.g., \cite{DP}).

In this paper we follow the approach that optimal investment timing decision can be mathematically determined as a solution of an optimal stopping problem (\ref{1}). Such an approach started from the well-known McDonald--Siegel model (see \cite{MS}, \cite{DP}), in which the underlying present value's dynamics is modeled by a geometric Brownian motion. The majority of results on this problem (optimal investment strategy) has a threshold structure: to invest when present value from the project exceeds the certain level (threshold).
In the heuristic level this is so for the cases of geometric Brownian motion, arithmetic Brownian motion, mean-reverting process and some other (see \cite{DP}). And the general question arises: For what underlying processes  an optimal decision to an investment timing problem will have a threshold structure?

Some sufficient conditions in this direction was obtained in \cite{Al}. In this paper we focus on necessary and sufficient conditions for optimality of threshold strategies in investment timing problem. Since this problem is a special case of optimal stopping problem, the similar question may be addressed to a general optimal stopping problem: Under what conditions (on both process and payoff function) an optimal stopping time will have a threshold structure? Some results in this direction (in the form of necessary and sufficient conditions) were obtained in \cite{AS}, \cite{CM}, \cite{A} under some additional assumptions on underlying process and/or payoffs.

The paper is organized as follows. After a formal description of investment timing problem and assumptions on underlying process (Section 2.1) we go to study of threshold strategies for this problem. Since an investment timing problem in threshold strategies is reduced to one-dimensional maximization, a related problem is to find an optimal threshold. In Section 2.2 we give necessary and sufficient conditions for optimal threshold (over all thresholds).
Solving a  free-boundary problem (based on smooth-pasting principle)  is the most accepted method (but not the only method, see, e.g., \cite{PS}) that allows to find a solution to optimal stopping problem. In Section 2.3 we discuss the connection between solutions to investment timing problem and to free-boundary problem. Finally, in Section 2.4 we prove the main result on necessary and sufficient conditions under which an optimal investment time is generated by threshold strategy.

\section{Investment Timing Problem}

Let $I$ be a cost of investment required for implementing a project, and $X_t$ is the present value from the project started at time $t$. As usual investment supposed to be instantaneous and irreversible, and the project --- infinitely-lived.

At any time a decision-maker (investor)  can either {\it accept} the project and proceed  with  the investment  or
{\it delay} the decision until he/she obtains  new information
regarding its environment (prices of the product and resources,
demand etc.). The goal of a decision-maker in this
situation is to find, using the available information, an
optimal time for investing the project (investment timing
problem), which maximizes the net present value from the project:
\beq
\label{optinv0}
{\Ex}\left(X_\t - I\right) e^{-\r \t}\II_{\{\t{<}\infty\}} \to \max_{\t\in\M},
\eeq
where $\Ex$ means the expectation for the process $X_t$ starting from the
initial state $x$, $\II_A$ is indicator function of the set $A$,
and the maximum is considered over stopping times $\t$ from a certain class $\M$ of stopping times\footnote{In this paper we consider stopping times which can take infinite values (with positive probability)}.

We consider the case $I<r$ else the optimal time in (\ref{optinv0}) will be $+\infty$.

\subsection{Mathematical Assumptions}

Let $X_t$ be a diffusion process with values in the  interval  $D\subseteq \R^1$ with boundary points $l$ and $r$, where $ -\infty \le l< r\le +\infty$, open or closed (i.e. it may be $(l,r)$,\, $[l,r)$,\, $(l,r]$,\, or $[l,r]$),  which is a solution to stochastic differential equation:
\beq \label{infop1}
dX_t=a(X_t)dt + \s(X_t) dw_t,\quad X_0=x,
\eeq
where $w_t$ is a standard Wiener process, $a:\ D \to \R^1$ and  $\s:\ D \to \R^1_+$ are the drift and diffusion functions, respectively. Denote $\I={\rm int} (D)=(l,r)$.

The process $X_t$ is assumed to be regular; this means
that, starting from an arbitrary point $x\in \I$, this process
reaches any point $y\in \I$ in finite time with positive
probability.

It is known that the following local integrability condition:
\beq \label{regular}
 \int_{x-\varepsilon}^{x+\varepsilon}\frac{1+|a(y)|}{\s^2(y)} dy <\infty \quad \mbox{for some }\varepsilon>0,
\eeq
at any $x\in \I$  guarantees the existence of weak solution of equation (\ref{infop1}) and its regularity (see, e.g. \cite{KS}).
\smallskip

The process $X_t$ is associated with infinitesimal operator
\beq\label{operator}
\L f(x)= a(x)f'(x) +\frac12\s^2(x) f''(x).
\eeq

Under the  condition (\ref{regular})
there exist (unique up to constant positive multipliers) increasing and decreasing functions $\psi(x)$ and $\varphi(x)$ with absolutely continuous derivatives, which are the fundamental  solutions to the ODE
\beq\label{diffur}
\L f(p)=\r f(p)
\eeq
almost sure (in Lebesque measure) on the interval $\I$ (see, e.g. \cite[Chapter 5, Lemma 5.26]{KS}). Moreover, $0<\psi(p),\, \varphi(p)<\infty$ for $p\in \I$.
Note, if functions $a(x),\ \s(x)$ are continuous, then \ $\psi,\, \varphi\in C^2(\I)$.

\subsection{Optimality of Threshold Strategies}

Let us define $\tp=\tp(x)=\inf\{t{\ge} 0: X_t\ge p\}$ --- the first time when the process $X_t$, starting from $x$, \emph{exceeds} level  $p$.  We will call $\tp$ as threshold stopping time generated by the threshold strategy  ---  to stop when the process exceeds threshold $p$. Let $\M_{\rm th}=\{\tp, \ p\in \I\}$ be a class of all such threshold stopping times.

For the class  $\M_{\rm th}$ of threshold stopping times the investment timing problem (\ref{optinv0}) can be written as follows:
\beq
\label{optinv1}
\left(p - I\right) {\Ex}e^{-\r \tp} \to \max_{p\in(l,r)}.
\eeq

Such a problem appeared in \cite{DPS} as the heuristic method for solving a general investment timing problem (\ref{optinv0}) over class of all stopping times.

We say that threshold  $p^*$ is optimal for the investment timing problem  (\ref{optinv1}) if threshold stopping time $\t_{p^*}$ is optimal in (\ref{optinv1}).
The following result gives necessary and sufficient conditions for optimal threshold.

\begin{theorem}
Threshold $p^*\in\I$ is optimal in the problem \emph{(\ref{optinv1})} for all $x\in \I$, if and only if the following conditions hold:
\begin{eqnarray}
&& \frac{p-I}{\psi(p)}\le \frac{p^*-I}{\psi(p^*)}\quad  \mbox{\rm whenever }  p<p^*; \label{criteria0}\\
&& \frac{p-I}{\psi(p)}\quad \mbox{\rm does not increase for }  p\ge p^* \label{criteria00},
\end{eqnarray}
where $\psi(p)$ is an increasing solution to ODE \emph{(\ref{diffur})}.
\end{theorem}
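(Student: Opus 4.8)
The plan is to reduce the two-variable optimization to the behaviour of the single function $g(p):=(p-I)/\psi(p)$ on $\I$. First I would record the classical hitting-time identity for regular diffusions: for $x\le p$,
\[
\Ex e^{-\r\tp}=\frac{\psi(x)}{\psi(p)} ,
\]
which I would derive from $\L\psi=\r\psi$: by It\^o, $e^{-\r t}\psi(X_t)$ is a local martingale, and the stopped process $e^{-\r(t\wedge\tp)}\psi(X_{t\wedge\tp})$ is a martingale bounded by $\psi(p)$ (since $X_{t\wedge\tp}\le p$ and $\psi$ is increasing). Passing $t\to\infty$, using path-continuity ($X_{\tp}=p$ on $\{\tp<\infty\}$) together with $e^{-\r t}\psi(X_{t\wedge\tp})\to0$ on $\{\tp=\infty\}$, yields the identity. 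For $x\ge p$ one has $\tp=0$ and the transform equals $1$. Consequently the expected discounted payoff of the threshold rule started at $x$ is
\[
V_p(x):=\Ex(X_{\tp}-I)e^{-\r\tp}\II_{\{\tp<\infty\}}
=\left\{\begin{array}{ll} g(p)\,\psi(x), & x\le p,\\ x-I, & x\ge p,\end{array}\right.
\]
the two branches agreeing at $x=p$ since $g(p)\psi(p)=p-I$.

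The next step is the observation that, because $\psi>0$, for thresholds $p\ge x$ the ranking of $V_p(x)=g(p)\psi(x)$ is governed solely by $g$, whereas for $p\le x$ one has $V_p(x)=p-I$, increasing in $p$ and dominated by its endpoint value $x-I=g(x)\psi(x)$. Hence $\max_p V_p(x)=\psi(x)\max_{q\ge x}g(q)$, and $p^*$ is optimal at the point $x$ exactly when $V_{p^*}(x)$ realizes this maximum. Conditions (\ref{criteria0})--(\ref{criteria00}) together say precisely that $g(p)\le g(p^*)$ for all $p\in\I$, i.e.\ $g(p^*)=\max_\I g$, plus the extra information that $g$ does not rise to the right of $p^*$.

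For sufficiency I would fix $x$ and split on the sign of $x-p^*$. If $x\le p^*$, then $V_{p^*}(x)=g(p^*)\psi(x)$; any $p\ge x$ gives $V_p(x)=g(p)\psi(x)\le g(p^*)\psi(x)$, and any $p\le x$ gives $V_p(x)=p-I\le x-I=g(x)\psi(x)\le g(p^*)\psi(x)$, so $p^*$ dominates. If $x>p^*$, then $V_{p^*}(x)=x-I$; for $p\le x$ one has $V_p(x)=p-I\le x-I$, while for $p>x>p^*$ the monotonicity (\ref{criteria00}) gives $g(p)\le g(x)$, whence $V_p(x)=g(p)\psi(x)\le g(x)\psi(x)=x-I$. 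Thus $p^*$ is optimal at every $x$.

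For necessity I would reverse these comparisons. Testing any $p<p^*$ against $p^*$ at a starting point $x\le p$ forces $g(p)\le g(p^*)$, which is (\ref{criteria0}). For (\ref{criteria00}) I would exploit the regime $x>p^*$: optimality of $p^*$ there means $x-I\ge g(p)\psi(x)$ for every $p\ge x$, i.e.\ $g(x)\ge g(p)$; letting $x$ range over $(p^*,r)$ shows $g$ is non-increasing on $(p^*,r)$, and continuity of $g$ (from $\psi\in C^1$, $\psi>0$) extends this to $[p^*,r)$. I expect this $x>p^*$ regime to be the main subtlety: there the value of the threshold rule is the immediate-exercise payoff $x-I$, not $p^*-I$, and it is exactly this comparison that forces the full monotonicity (\ref{criteria00}) rather than the weaker $g(p)\le g(p^*)$ for $p\ge p^*$. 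The only other point needing care is the justification of the hitting-time identity and the optional-stopping passage when $\{\tp=\infty\}$ has positive probability, which is where regularity of $X_t$ and the finiteness and positivity of $\psi$ on $\I$ enter.
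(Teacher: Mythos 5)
Your proof is correct and takes essentially the same route as the paper: both reduce the problem to the single function $g(p)=(p-I)/\psi(p)$ via the identity $V_p(x)=g(p)\psi(x)$ for $x\le p$ and $V_p(x)=x-I$ for $x\ge p$, and then compare thresholds by elementary case analysis (you organize the cases by the position of $x$ relative to $p^*$, the paper by the position of $p$ relative to $p^*$, which is an immaterial difference). The paper cites the hitting-time formula rather than deriving it, and gets the boundary inequality $g(p)\le g(p^*)$ for $p>p^*$ by testing directly at the starting point $x=p^*$ instead of your continuity argument; both are valid.
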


\begin{proof} Let us denote the left-hand side in (\ref{optinv1}) as $V(p;x)$.
Obviously, $V(p;x)=x-I$ for $x\ge p$.

Along with the above stopping time let us define the first hitting time to threshold:  $T_p= \inf\{t{\ge} 0: X_t= p\},\ p\in (l,r)$.

For $x<p$, obviously, $\tp=T_p$ and using known formula $\ds \Ex e^{-\r T_p}= \psi(x)/\psi(p)$ (see, e.g., \cite{IM}, \cite{BS})we have:
\beq\label{Vpx}
V(p;x)=(p-I)\Ex e^{-\r\tp} \II_{\{\tp{<}\infty\}}=(p-I)\Ex e^{-\r T_p} =\frac{p-I}{\psi(p)}\psi(x).
\eeq

Denote $h(x)=(x-I)/\psi(x)$.

i) Let $p^*\in\I$ be an optimal  threshold in the problem (\ref{optinv1}) for all $x\in \I$. Then for $p<p^*$ we have
$$
V(p;p)=p-I\le V(p^*;p)=\frac{p^*-I}{\psi(p^*)} \psi(p),
$$
i.e. (\ref{criteria0}) holds. If $p^*\le p_1 < p_2$, then
$$
V(p_2;p_1)=h(p_2) \psi(p_1)\le V(p^*;p_1)= p_1-I=
h(p_1) \psi(p_1),
$$
and it follows (\ref{criteria00}).

ii) Now, let (\ref{criteria0})--(\ref{criteria00}) hold.

Let $p<p^*$.
If $x \ge p^*$, then $V(p;x)= x-I=
V(p^*;x)$.

If $p\le x<{p^*}$, then, due to (\ref{criteria0}), $V_p(x)=x-I=h(x)\psi(x) \le h(p^*)\psi(x)= V(p^*;x)$.

Finally, if $x<p$, then, using (\ref{criteria0}) and (\ref{Vpx}), we have: $V(p;x)= h(p)\psi(x)\le h(p^*)\psi(x)= V(p^*;x)$.

Consider the case $p>p^*$. If $x\ge p$, then  $V(p;x)= x-I=
V(p^*;x)$.

Whenever $p^*\le x < p$, then, due to (\ref{criteria00}),
$V(p;x)= h(p)\psi(x)\le h(x)\psi(x)= x-I=V(p^*;x)$.

When $x < p^*$, then $V(p;x)= h(p)\psi(x)\le h(p^*)\psi(x)= V(p^*;x)$, since $h(p)\le h(p^*)$.

Theorem is completely proved.
\end{proof}

\begin{remark} The condition (\ref{criteria00}) is equivalent to the inequality
$$
(p -I) \psi'(p) \ge  \psi(p) \quad {\rm for } \ p\ge p^*.
$$
This relation implies, in particular, that optimal threshold $p^*$ must be strictly greater than investment cost $I$ (because $\psi(p^*),\ \psi'(p^*)$ are positive values).
\end{remark}

\begin{remark} Assume that $\log \psi(x)$ is a convex function, i.e. $\psi'(x)/\psi(x)$ increases. For this case there exists a unique point $p^*$ which satisfies the equation
\beq\label{smooth}
  (p^* -I) \psi'(p^*) = \psi(p^*)
\eeq
and constitutes the optimal threshold  in the problem (\ref{optinv1}) for all $x\in \I$. Indeed, the sign of derivative of the function $(p-I)/\psi(p)$ coincides with the sign of $\psi(p)-(p -I) \psi'(p)$. Therefore, in the considered case the conditions (\ref{criteria0})--(\ref{criteria00}) in Theorem 1 are true automatically.
\end{remark}


We can give a number of cases of diffusion processes which are more or less realistic for modeling a process $X_t$ of present values from a project. Some of them are presented below.
\smallskip

1) \emph{Geometric Brownian motion (GBM)}:
\beq \label{GBM}
dX_t=X_t(\a dt + \s dw_t).
\eeq

For this case $\ds \psi(x)=x^\b$, where  $\b$ is the positive root of the equation\\ $\ds \textstyle\frac12\s^2\b(\b-1)+\a\b-\r = 0$.

2) \emph{Arithmetic Brownian motion (ABM)}:
\beq \label{ABM}
dX_t=x+ \a dt + \s \,dw_t.
\eeq

For this case $\ds \psi(x)=e^{\b x}$, where  $\b$ is the positive root of the equation\\ $\ds \textstyle\frac12\s^2\b^2+\a\b-\r = 0$.

3) \emph{Mean-reverting process \ (or geometric Ornstein--Uhlenbeck process)}:
\beq \label{MRP}
dX_t=\a (\bar x-X_t)X_tdt + \s X_t\, dw_t.
\eeq

For this case $\ds \psi(x)=x^\b {}_1\! F_1\left(\b,2\b+\frac{2\a \bar x}{\s^2}; \frac{2\a}{\s^2}x\right) $, where  $\b$ is the positive root of  equation $\ds \textstyle\frac12\s^2\b(\b-1)+\a\bar x\b-\r = 0$, and ${}_1\! F_1(p,q;x)$ is confluent hypergeometric function satisfying Kummer's equation $\ds xf''(x)+(q-x)f'(x)-pf(x)=0$.

4) \emph{Square-root mean-reverting process \ (or Cox--Ingersoll--Ross process)}:
\beq \label{SRMRP}
dX_t=\a (\bar x-X_t)dt + \s \sqrt {X_t}\, dw_t.
\eeq

For this case $\ds \psi(x)= {} _1\!F_1\left(\frac{\r}{\a},\frac{2\a \bar x}{\s^2}; \frac{2\a}{\s^2}x\right) $.
\smallskip

The above processes are well studied in the literature (in connection with real options and optimal stopping problems see, for example, \cite{DP}, \cite{JZ}).

For the first two processes (\ref{GBM}) and (\ref{ABM}) the conditions of Theorem 1 give explicit formulas for optimal threshold in investment timing problem:
$$
p^*=\frac{\b}{\b-1}I \ \  {\rm for  \ GBM}, \quad {\rm and } \ \ p^*=I+\frac1{\b} \ \ {\rm for \ ABM}.
$$
On the contrary, for mean-reverting processes (\ref{MRP}) and (\ref{SRMRP}) the function $\psi(x)$ is represented as infinite series, and  optimal threshold can be find only numerically.
\smallskip

So, Theorem 1 states that optimal threshold $p^*$ is a point of maximum for the function $h(x)=(x-I)/\psi(x)$. This implies the first-order optimality condition $h'(p^*)=0$, i.e. the equality (\ref{smooth}), and smooth-pasting principle:
$$
\ds V'_x(p^*;x)\bigr|_{x=p^*}=1.
$$

 In the next section we discuss smooth-pasting principle and appropriate free-boundary problem more closely.

\subsection{Threshold Strategies and Free-Boundary Problem}

It is almost common opinion (especially among engineers and economists) that solution to free-boundary problem always gives a solution to optimal stopping problem.

A free-boundary problem for the case of threshold strategies in investment timing problem  can be written as follows:
to find threshold $p^*\in (l,r)$ \ and twice differentiable function $H(x),\ l<x<p^*$, such that
\begin{eqnarray}
&&\L H(x)=\r H(x), \quad l<x<p^*;\label{fbp1} \\
&& H(p^*{-}0)= p^* -I, \quad H'(p^*{-}0)=1 \label{fbp2}.
\end{eqnarray}

If $\psi(x)$ is twice differentiable, then solution to the problem (\ref{fbp1})--(\ref{fbp2}) has the type
\beq\label{fbp0}
H(x)= \frac{p^*-I}{\psi(p^*)}\psi(x),\quad   l<x<p^*,
\eeq
where $\psi(x)$ is an increasing solution to ODE (\ref{diffur}) and $p^*$ satisfies the smooth-pasting condition (\ref{smooth}). In further, we will call such $p^*$ the solution to free-boundary problem.

According to Theorem 1 the optimal threshold in problem (\ref{optinv1}) must be a point of maximum of the function $h(x)=(x-I)/\psi(x)$, but smooth-pasting condition (\ref{smooth}) provides only a stationary point for $h(x)$. Thus, we can apply standard second-order optimality conditions to derive relations between solutions to investment timing problem and to  free-boundary problem.

Let $p^*$ be a solution to free-boundary problem (\ref{fbp1})--(\ref{fbp2}). If $p^*$ is also an optimal threshold in investment timing problem (\ref{optinv1}), then, of course, $h''(p^*)\le 0$. It means that
$$
\psi''(p^*)=-\frac{h''(p^*)\psi(p^*)+ 2h'(p^*)\psi'(p^*) }{h(p^*)}= -\frac{h''(p^*)\psi(p^*)}{h(p^*)} \ge 0.
$$
Thus, the inequality $\psi''(p^*)\ge 0$ may be viewed as necessary condition for a solution of  free-boundary problem to be optimal in investment timing problem.

The inverse relation between solutions can be state as follows.
\medskip

\noindent {\bf Statement 1.} {\it If \ $p^*$ is the unique solution to free-boundary problem \emph{(\ref{fbp1})--(\ref{fbp2})}, and $\psi''(p^*)> 0$, then  $p^*$ is optimal threshold in the problem \emph{(\ref{optinv1})} for all $x\in \I$.
}

\begin{proof} \ Since $h'(p^*)=0$ and $\psi''(p^*)> 0$ then $h''(p^*)=-h(p^*)\psi''(p^*)/\psi(p^*){<}0$. Therefore, $h'(p)$ strictly decreases at some neighborhood of $p^*$.

Then, it is easy to see that  $h'(p)>0$ for $p<p^*$ and $h'(p)<0$ for $p>p^*$, else  $h'(q)=0$ for some $q\neq p^*$, that contradicts to the uniqueness of solution to free-boundary problem (\ref{fbp1})--(\ref{fbp2}). So, conditions  (\ref{criteria0})--(\ref{criteria00}) hold and Theorem 1 gives the optimality of threshold ${p^*}$.
\end{proof}

The following result concerns the general case when free-boundary problem has several solutions.\medskip

\noindent{\bf Statement 2.} {\it Let  \ $p^*$ and \ $\tilde p$ \ are two solutions   to free-boundary problem \emph{(\ref{fbp1})--(\ref{fbp2})}, such that \ $\psi''(p^*)> 0$ and $(x-I)/\psi(x)\le (p^*-I)/\psi(p^*)$ for $l{<}x{<} p^*$. If $\tilde p >p^*$, and
 $\psi^{(k)}(\tilde p)=0$ \ $(k=2,...,n-1)$, $\psi^{(n)}(\tilde p)> 0$ for some $n>2$,
then ${p^*}$ is optimal threshold in the problem \emph{(\ref{optinv1})} for all $x\in \I$.
}
\medskip

\begin{proof} \
Let us prove that $h'(p)\le 0$ for all $p>p^*$. Inequality $\psi''(p^*)> 0$ implies (as above) that $h''(p^*)< 0$, and, therefore,  $h'(p)< 0$ for all $p^*<p<p_1$ with some $p_1$. If we suppose that $h'(p_2)>0$ for some $p_2 >p^*$, then there exists $p_0\in (p_1,p_2)$ such that $h'(p_0)=0$ and $h'(p)>0$ for all $p_0<p<p_2$.  Therefore, $p_0$ is another solution to free-boundary problem  (\ref{fbp1})--(\ref{fbp2}), and due to conditions of the Statement $h^{(k)}(p_0)=0$ \ $(k=2,...,n-1)$, $h^{(n)}(p_0)< 0$ for some $n>2$, that contradicts to positivity of $h'(p)$ for $p_0<p<p_2$.

Hence, $h'(p)\le 0$ for all $p>p^*$ and conditions  (\ref{criteria0})--(\ref{criteria00}) hold. Thus, according to Theorem 1, ${p^*}$ is optimal threshold in the problem (\ref{optinv1}).
\end{proof}

\medskip

\subsection{Optimal Strategies in Investment Timing Problem}

Now, return to `general' investment timing problem (\ref{optinv0}) over the class $\M$ of \emph{all stopping times}.

The sufficient conditions under which an optimal investment time in (\ref{optinv0}) will be a threshold stopping time were derived in \cite{Al}. In this section we give necessary and sufficient conditions (criterion) for optimality of threshold stopping time in investment timing problem (\ref{optinv0}).

To reduce some technical difficulties we assume below that drift $a(x)$ and diffusion $\s(x)$ of the underlying process $X_t$ are continuous functions.

\begin{theorem} \
Threshold stopping time  $\t_{p^*}$, $p^*{\in} (l,r)$, is optimal in the investment timing problem \emph{(\ref{optinv0})} for all $x{\in}\, \I$ if and only if the following conditions hold:
\begin{eqnarray}
&&
(p-I){\psi(p^*)}\le (p^*-I){\psi(p)}\quad \mbox{\rm  for } p<p^*;\label{criteria110}\\[3pt]
&& \psi(p^*)=(p^*-I)\psi'(p^*);
\label{criteria11}\\
&& a(p)\le \r (p- I) \quad \mbox{\rm  for } p>p^*, \label{criteria21}
\end{eqnarray}
where $\psi(x)$ is an increasing solution to ODE \emph{(\ref{diffur})} and $a(p)$ is the drift function of the process $X_t$.
\end{theorem}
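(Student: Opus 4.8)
The plan is to read (\ref{optinv0}) as an optimal stopping problem with discounted gain $g(x)=x-I$ and to establish optimality of $\t_{p^*}$ via the theory of $\r$-excessive (superharmonic) majorants. I would attach to the threshold strategy the candidate value function
\[
W(x)=\left\{\begin{array}{ll} h(p^*)\,\psi(x), & l<x<p^*,\\ x-I, & p^*\le x<r,\end{array}\right.
\]
with $h(x)=(x-I)/\psi(x)$ as in Theorem 1; by the computation (\ref{Vpx}) this $W$ is exactly the payoff delivered by $\t_{p^*}$. The goal is to show that (\ref{criteria110})--(\ref{criteria21}) are precisely the conditions under which $W$ coincides with the value function $V(x)=\sup_{\t\in\M}\Ex e^{-\r\t}g(X_\t)\II_{\{\t<\infty\}}$.

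For sufficiency I would verify the three defining properties of the least $\r$-excessive majorant. Condition (\ref{criteria110}) is the same as $h(p)\le h(p^*)$ for $p<p^*$, which yields $W\ge g$ on $(l,p^*)$, with equality on $[p^*,r)$ by construction, so $W\ge g$ on all of $\I$. The smooth-pasting relation (\ref{criteria11}) gives $W'(p^*{-}0)=h(p^*)\psi'(p^*)=1=W'(p^*{+}0)$, so $W\in C^1(\I)$ (here $\psi\in C^2(\I)$ because $a,\s$ are continuous). Finally $(\L-\r)W=h(p^*)(\L\psi-\r\psi)=0$ on $(l,p^*)$, while on $(p^*,r)$, where $W=g$, one has $(\L-\r)W=a(x)-\r(x-I)\le 0$ exactly by (\ref{criteria21}). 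Applying the generalized It\^o (It\^o--Tanaka) formula to $e^{-\r t}W(X_t)$—the $C^1$ pasting kills the local-time term at $p^*$—shows $e^{-\r t}W(X_t)$ is a supermartingale, i.e.\ $W$ is $\r$-excessive. A localization at exit times $T_n=\inf\{t:X_t\notin(l_n,r_n)\}$ and passage to the limit then give $W(x)\ge\Ex e^{-\r\t}g(X_\t)\II_{\{\t<\infty\}}$ for every $\t\in\M$, so $W\ge V$; for $\t=\t_{p^*}$ the drift vanishes up to $\t_{p^*}$ and equality holds, whence $\t_{p^*}$ attains $W=V$ and is optimal.

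For necessity, optimality of $\t_{p^*}$ over all of $\M$ forces $V=W$. Since $\M_{\rm th}\subset\M$, $\t_{p^*}$ is a fortiori optimal among threshold strategies, so Theorem 1 supplies (\ref{criteria0})$=$(\ref{criteria110}) and that $p^*$ maximizes $h$; as $h\in C^1(\I)$, the first-order condition $h'(p^*)=0$ is just the smooth-pasting equality (\ref{criteria11}). The drift condition (\ref{criteria21}) comes from the excessivity of the value function: on the open stopping region $(p^*,r)$ one has $V=g$, and $\r$-excessivity of $V$ forces $(\L-\r)g\le 0$, i.e.\ $a(p)\le\r(p-I)$. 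More concretely, if $a(p_0)>\r(p_0-I)$ at some $p_0>p^*$, Dynkin's formula shows that delaying the stop from $p_0$ by a short time strictly increases the expected discounted payoff above the immediate value $p_0-I$, contradicting optimality of stopping at once.

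I expect the main obstacle to be the rigorous supermartingale argument through the kink at $p^*$ together with the limiting step: obtaining the It\^o--Tanaka decomposition with a non-positive drift requires (\ref{criteria11}) to eliminate the local-time contribution and the continuity of $a,\s$ to control $W''$ on each side, and the passage from $\t\wedge T_n$ to $\t$ must handle the event $\{\t=\infty\}$, the factor $e^{-\r t}$, and the boundary behaviour near $l,r$. A useful internal consistency check is that (\ref{criteria11}) and (\ref{criteria21}) together already force Theorem 1's monotonicity (\ref{criteria00}): with $\Delta(p)=\psi(p)-(p-I)\psi'(p)$ one has $\Delta(p^*)=0$ and $\Delta'(p)=-(p-I)\psi''(p)$, and substituting $\psi''=2\s^{-2}(\r\psi-a\psi')$ turns this into a linear equation $\Delta'+2\r(p-I)\s^{-2}\Delta=2(p-I)\s^{-2}\psi'\,(a-\r(p-I))$ whose forcing is non-positive for $p>p^*$ by (\ref{criteria21}); hence $\Delta(p)\le 0$, i.e.\ $h$ does not increase beyond $p^*$, consistent with $W$ being a genuine majorant.
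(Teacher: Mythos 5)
Your proposal is correct and follows essentially the same route as the paper: the same candidate function (the paper's $\Phi$), the same verification via the It\^o--Tanaka formula with the local-time term killed by smooth pasting (\ref{criteria11}), the zero/non-positive drift split across $p^*$ using (\ref{criteria11}) and (\ref{criteria21}), a localization-plus-limit step (the paper truncates at $\t\wedge N$ and uses Fatou, relying on $\Phi>0$ which follows from (\ref{criteria11}) forcing $p^*>I$), and for necessity the identical reduction to Theorem~1 plus the Dynkin-formula contradiction on an interval where $a(p)>\r(p-I)$. The only cosmetic difference is that the paper packages the final inequality through Dynkin's characterization of the value function as the least $\r$-excessive majorant, whereas you bound $\Ex e^{-\r\t}g(X_\t)\II_{\{\t<\infty\}}$ directly by the supermartingale property; the underlying computation is the same.
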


\begin{proof}
Define the value function for the problem (\ref{optinv0}) over the class $\M$ of all stopping times
$$
V(x)=\sup_{\t\in\M}{\Ex}\left(X_\t - I\right) e^{-\r \t}\II_{\{\t{<}\infty\}}.
$$

i) Let conditions (\ref{criteria110})--(\ref{criteria21}) hold.

Take the function
$$
\Phi(x)=V({p^*};x)=\left\{
\begin{array}{ll}\ds\frac{p^*-I}{\psi(p^*)}
\psi(x), & \ {\rm for }\ x<p^*,\\[8pt]
x-I, &    \ {\rm for }\ x\ge p^*.
\end{array}
\right.
$$

Obviously, $\Phi(x)>0$ (due to condition (\ref{criteria11})) and $ V(x)\ge \Phi(x)$.

On the other hand, (\ref{criteria110}) implies
$$
\frac{p^*-I}{\psi(p^*)}\psi(x)\ge \frac{x-I}{\psi(x)}\psi(x)=x-I,
$$
therefore $\Phi(x)\ge x-I$ for all $x\in(l,r)$, i.e. $\Phi(x)$ is a majorant of payoff function $x-I$.

For any stopping time $\t\in \M$ and $N>0$ put $\tilde\t=\t{\wedge}N$. From It\^{o}--Tanaka--Meyer formula (see, e.g. \cite{KS}) we have:
\begin{eqnarray}
\Ex \Phi(X_{\tilde\t}) e^{-\r\tilde\t}&=& \Phi (x)+\Ex \int_0^{\tilde\t}(\L\Phi -\r\Phi)(X_t)e^{-\r t}dt\nonumber\\
&+&\frac12 \s^2(p^*)[\Phi'(p^*{+}0)-\Phi'(p^*{-}0)]\Ex \int_0^{\tilde\t} e^{-\r t}dL_t(p^*),
\end{eqnarray}
where $L_t(p^*)$ is the local time of the process $X_t$ at the point $p^*$.

By definition  we have
$$
\Phi'(p^*{+}0)-\Phi'(p^*{-}0)= 1-
\frac{p^*-I}{\psi(p^*)} \psi'(p^*)= 0
$$
due to (\ref{criteria11}).

Take $T_1=\{0\le t\le {\tilde\t}:\, X_t < p^*\}$, \ $T_2=\{0\le t\le {\tilde\t}:\, X_t > p^*\}$. We have:
\begin{eqnarray*}
&&\L\Phi(X_t) -\r\Phi(X_t) =\frac{p^*-I}{\psi(p^*)} \Bigl(\L\psi(X_t)-\r\psi(X_t)\Bigr)=0 \quad {\rm for } \ t\in T_1,\\
&&\L\Phi(X_t) -\r\Phi(X_t) =a(X_t) -\r (X_t-I)\le 0 \quad {\rm for } \  t\in T_2
\end{eqnarray*}
by definition of the function $\psi(x)$ and (\ref{criteria21})).

Then
\begin{eqnarray*}
\Ex \Phi(X_{\tilde\t}) e^{-\r\tilde\t}&\le &\Phi (x)+\Ex \left( \int\limits_{T_1}(\L\Phi {-}\r\Phi)(X_t)e^{-\r t}dt \right.
+ \left.\int\limits_{T_2}(\L\Phi {-}\r\Phi)(X_t)e^{-\r t}dt\right)\\
&\le &\Phi (x).
\end{eqnarray*}

Since $\Phi(X_{\tilde\t}) e^{-\r \tilde\t}\stackrel{\mbox{\scriptsize a.s.}}{\longrightarrow} \Phi(X_{\t}) e^{-\r\t}\II_{\{\t{<}\infty\}}$ when $N\to \infty$, then due to Fatou's Lemma : $\ds \Ex \Phi(X_{\t}) e^{-\r\t}\II_{\{\t{<}\infty\}}\le \Phi (x)$ for all $\t\in \M$ and $x\in \I$. Therefore, $\Phi(x)$ is $\r$-excessive function, which majorates payoff function $x-I$. Since, by Dynkin's characterization, value function $V(x)$ is the least $\r$-excessive majorant, then $ V(x)\le \Phi(x)$.

Therefore, $ V(x)= \Phi(x)=V({p^*};x)$, i.e. $\t_{p^*}$ is the optimal stopping time in problem (\ref{optinv0}) for all $x$.

ii)  Now, let $\t_{p^*}$ be optimal stopping time in the problem (\ref{optinv0}). Note, that $\t_{p^*}$ will be an optimal stopping time in the problem (\ref{optinv1}) also. Therefore, Theorem 1 implies  (\ref{criteria110}) and (\ref{criteria11}), since $p^*$ is point of maximum for the function $(x-I)/\psi(x)$.

Further, assume that inequality (\ref{criteria21}) is not true at some point $p_0 > p^*$, i.e. $a(p)> \r (p-I)$ in some interval $J\subset (p^*,r)$ (by virtue of continuity). For some $\tilde x\in J$ define $ \t=\inf\{t\ge 0:\, X_t\notin J\}$, where process $X_t$ starts from the point  $\tilde x$. Then for any $N>0$ from Dynkin's formula
$$
\E^{\tilde x} (X_{\t{\wedge}N}-I) e^{-\r (\t{\wedge}N)}= \tilde x-I+\E^{\tilde x} \int_0^{\t{\wedge}N}[a(X_t) -\r (X_t-I)]e^{-\r t}dt >\tilde x-I.
$$
Therefore, $V(\tilde x) > \tilde x-I$ that contradicts to $V(\tilde x)=V({p^*};\tilde x)=g(\tilde x)$,  since $\tilde x>p^*$.

\end{proof}

\begin{example}
Let $X_t$ be the process of geometric Brownian motion (\ref{GBM}). Then Theorem 2 implies that threshold stopping time  $\t_{p^*}$ will be optimal in the investment timing problem (\ref{optinv0}) over all investment times if and only if
$\ds
p^*=I{\b}/{(\b-1)},
$
where $\b$ is the positive root of the equation $\ds \textstyle\frac12\s^2\b(\b-1)+\a\b-\r = 0$.
\end{example}

\subsubsection*{Acknowledgments.} The work was supported by Russian Foundation for Basic Researches (project 15-06-03723) and Russian Foundation for Humanities (project 14-02-00036).

\end{document}